\newtheorem{ourClaim}{Claim}{\bfseries}{ }
\begin{document}

\title{Computing a Longest Common Palindromic Subsequence}

\author{Shihabur Rahman Chowdhury \and Md. Mahbubul Hasan \and Sumaiya Iqbal \and M. Sohel Rahman}

\institute{A$\ell$EDA group\\Department of CSE, BUET, Dhaka - 1000, Bangladesh\\
\email{\{shihab,mahbubul,sumaiya,msrahman\}@cse.buet.ac.bd}}

\maketitle

\begin{abstract}
The {\em longest common subsequence (LCS)} problem is a classic and well-studied problem in computer science. %LCS is a central problem in stringology and finds broad applications in text compression, error-detecting codes and biological sequence comparison. However, in numerous contexts, words represent a specific pattern and LCS must be generalized to consider those patterns in the sequence. For example, 
Palindrome is a word which reads the same forward as it does backward. The {\em longest common palindromic subsequence (LCPS)} problem is an interesting variant of the classic LCS problem which finds the longest common subsequence between two given strings such that the computed subsequence is also a palindrome. In this paper, we study the LCPS problem and give efficient algorithms to solve this problem. To the best of our knowledge, this is the first attempt to study and solve this interesting problem.\\

\noindent
{\bf Keywords: } algorithms, longest common subsequence, palindromes, dynamic programming
\end{abstract}

\section{Introduction}
\label{intro}

The {\em longest common subsequence (LCS)} problem is a classic and well-studied problem in computer science with a lot of variants arising out of different practical scanarios. In this paper, we introduce and study the {\em longest common palindromic subsequence (LCPS)} problem. A {\em subsequence} of a string is obtained by deleting zero or more symbols of that string. A {\em common subsequence} of two strings is a subsequence common to both the strings. A \emph{palindrome} is a word, phrase, number, or other sequence of units which reads the same forward as it does backward. The LCS problem for two strings is to find a common subsequence in both the strings, having maximum possible length. In the LCPS problem, the computed longest common subsequence, i.e., LCS, must also be a palindrome. More formally, given a pair of strings $X$ and $Y$ over the alphabet $\Sigma$, the goal of the LCPS problem is to compute a Longest Common Subsequence $Z$ such that $Z$ is a palindrome.
%\hspace{0in}
%\textbf{Problem Definition.} For a given pair of strings $X$, $Y$ over the alphabet $\Sigma$, we need to solve the following maximization problem.
%
%\textbf{Name:} Longest Common Palindromic Subsequence (LCPS).
%
%\textbf{Input:} A pair of strings $X$, $Y$ over the alphabet $\Sigma$.
%
%\textbf{Output:} The longest common subsequence, $Z$, of $X$ and $Y$, such that $Z$ is a palindrome.
In what follows, for the sake of convenience we will assume, that $X$ and $Y$ have equal length, $n$. But our result can be easily extended to handle two strings of different length.

String and sequence algorithms related to palindromes have attracted stringology researchers since long~\cite{Apo95,Bre95,h,Galil76,p,o,k,Mart09,i,n,m}. The LCPS problem also seems to be a new interesting addition to the already rich list of problems related to palindromes. Apart from being interesting from pure theoretical point of view, LCPS has motivation from computational biology as well. Biologists believe that palindromes play an important role in regulation of gene activity and other cell processes because these are often observed near promoters, introns and specific untranslated regions. So, finding common palindromes in two genome sequences can be an important criterion for comparing them, and also to find common relationships between them.

To the best of our knowledge, there exists no research work in the literature on computing longest common palindromic subsequences. However, the problem of computing palindromes and variants in a single sequence has received much attention in the literature. Martnek and Lexa studied faster palindrome searching methods by hardware acceleration~\cite{i}. They showed that their results are better than software methods. A searching method for palindromic sequences in the primary structure of protein was presented in~\cite{j}. Manacher discovered an on-line sequential algorithm that finds all \textit{`initial'\footnote{A string $X[1 \ldots n]$ is said to have an initial palindrome of length $k$ if the prefix $S[1 \ldots k]$ is a palindrome.}} palindromes in a string~\cite{k}. Gusfield gave a linear-time algorithm to find all \textit{`maximal' palindromes} in a string~\cite{l}. Porto and Barbosa gave an algorithm to find all \textit{approximate palindromes} in a string~\cite{m}. In~\cite{h}, a simple web based tool is presented to assist biologist to detect palindromes in a DNA sequence. Authors in~\cite{n} solved the problem of finding all palindromes in SLP (Straight Line Programs)-compressed strings. Additionally, a number of variants of palindromes have also been investigated in the literature~\cite{p,t,o}. Very recently, Tomohiro \emph{et. al.} worked on pattern matching problems involving palindromes~\cite{u}.
%Their solution avoids decompressing the string and runs in $\mathcal{O}(n^4)$ time and $\mathcal{O}(n^2)$ space where $n$ is the length of compressed string.
%
%Different variants of palindromes have also attracted good amount of attention in the literature. For example, the problem of finding all \textit{maximal approximate gapped palindromes\footnote{a palindrome where some mismatch between the first half and second half is permitted.}} in a string has been studied in~\cite{p}. 
%%The authors proposed an $\mathcal{O}(kn)$ time algorithm for finding all gapped palindromes. Here, $k$ is defined as the tolerance between the first half and second half of the resultant palindrome. 
%The same authors also proposed two algorithms for finding approximate palindrome in run length encoded string~\cite{t}. In~\cite{o}, the authors introduced two types of constraints namely \emph{long-armed} and \emph{length-constraint} for the gapped palindrome problem and devised efficient algorithms to solve the problems. 
%%They gave an $\mathcal{O}(n + S)$ time algorithm to find all the gapped palindromes in a string of length $n$, where $S$ is the size of output set.
%Very recently, Tomohiro \emph{et. al.} worked on pattern matching problems involving palindromes~\cite{u}.

\subsection{Our Contribution}

In this paper, we introduce and study the LCPS problem. We, propose two methods for finding an LCPS, given two strings. Firstly we present a dynamic programming algorithm to solve the problem with time complexity $O(n^4)$, where $n$ is the size of the strings. Then, we present another algorithm that runs in $\mathcal{O}(\mathcal{R}^2\log^3n)$ time. Here, the set of all ordered pair of matches between two strings is denoted by $\mathcal{M}$ and $|\mathcal{M}| = \mathcal{R}$.

\section{Preliminaries}
\label{prl}
%
%\subsection{Substring, Palindrome, Subsequence, Palindromic Subsequence, Longest Common Palindromic Subsequence}
%
We assume a finite alphabet, $\Sigma$. For a string $X = x_1 x_2 \ldots x_n$, we define the string $x_i \ldots x_j$ ($1 \leq i \leq j \leq n$) as a \emph{substring} of $X$ and denote it by $X_{i,j}$. A \emph{palindrome} is a string which reads the same forward and backward. We say a string $Z = z_1 z_2 \ldots z_u$ is a palindrome \emph{iff} $z_i = z_{u - i + 1}$ for any $ 1 \leq i \leq \left\lceil \frac{u}{2} \right\rceil $. A \emph{subsequence} of a string $X$ is a sequence obtained by deleting zero or more characters from $X$. A subsequence $Z$ of $X$ is a \emph{palindromic subsequence} if $Z$ is a palindrome.
For two strings $X$ and $Y$, if a common subsequence $Z$ of $X$ and $Y$ is a palindrome, then $Z$ is said to be a \emph{common palindromic subsequence (CPS)}. A \emph{CPS} of two strings having the maximum length is called the \emph{Longest Common Palindromic Subsequence (LCPS)} and we denote it by $LCPS(X,Y)$.
%
%\subsection{Match, $\mathbf{R}_\sigma$, $\sigma$-match}
%

For two strings $X = x_1 x_2 \ldots x_n$ and $Y = y_1 y_2 \ldots y_n$ we define a \emph{match} to be an ordered pair $(i,j)$ such that $X$ and $Y$ has a matching character at that position respectively, that is $x_i$ = $y_j$. The set of all matches between two strings $X$ and $Y$ is denoted by $\mathcal{M}$ and it is defined as, $\mathcal{M} = \{ (i,j) : 1 \leq i \leq n , 1 \leq j \leq n \text{ and } x_i = y_j \}$. And we have, $|\mathcal{M}| = \mathcal{R}$. We define, \emph{$\mathcal{M}_\sigma$} as a subset of $\mathcal{M}$ such that all matches within this set match to a single character $\sigma \in \Sigma$. That is, $\mathcal{M}_\sigma = \{ (i,j) : 1 \leq i \leq n , 1 \leq j \leq n \text{ and } x_i = y_j = \sigma \in \Sigma\}$. And we also have $|\mathcal{M}_\sigma|=\mathcal{R}_\sigma$. Clearly, $\mathcal{M}_\sigma \subset \mathcal{M}$ and $\mathcal{M} = \displaystyle\bigcup_{\sigma \in \Sigma} \mathcal{M}_\sigma$. Each member of $\mathcal{M}_\sigma$ is called a \emph{$\sigma$-match}.

\section{A Dynamic Programming Algorithm}
\label{dpAlgo}
%In order to solve a problem using Dynamic Programming technique it should have \emph{\textbf{Optimal Substructure}} property and \emph{\textbf{Overlapping Subproblem}} property~\cite{a}. In the following sections we shall see that the \emph{LCPS} problem has both the properties. Then we shall devise a Dynamic Programming algorithm for this problem.
%\subsection{Characterizing a Longest Common Palindromic Subsequence}
%\label{dpchLCPS}
A brute-force approach to this problem would be to enumerate all the subsequences of $X$ and $Y$ and compare them, keeping track of the longest palindromic subsequence found. There are $2^n$ subsequences of any string of length $n$. So the brute force approach would lead to an exponential time algorithm. In this section, we will devise a dynamic programming algorithm for the LCPS problem. Here, we will see that the natural classes of subproblems for LCPS correspond to pairs of \emph{substrings} of the two input sequences. We first present the following theorem which proves the optimal substructure property of the LCPS problem.

%The \emph{LCPS} problem has an \textbf{\textit{optimal-substructure property}}, as shown in Theorem~\ref{thDp} below. Notably, optimal substructure property is a requirement for a DP-algorithm~\cite{a}. Here, we shall see that the natural classes of subproblems correspond to pairs of `substrings' of the two input sequences.
\begin{theorem}
\label{thDp}
Let $X$ and $Y$ are two sequences of length $n$ and $X_{i,j} = x_i x_{i+1} \ldots x_{j-1} x_j$ and $Y_{k,\ell} = y_k y_{k+1} \ldots y_{\ell-1} y_\ell$ are two substrings of those respectively. Let $Z = z_1 z_2 \ldots z_u$ be the \emph{LCPS} of the two substrings, $X_{i,j}$ and $Y_{k,\ell}$. Then, the following statements hold,
\begin{enumerate}
	\item If $x_i = x_j = y_k = y_\ell = a$ ($a \in \Sigma$), then $z_1 = z_u = a$ and $z_2 \ldots z_{u - 1}$ is an \emph{LCPS} of $X_{i + 1,j - 1}$ and $Y_{k + 1,\ell - 1}$.
	\item If $x_i = x_j = y_k = y_l$ condition does not hold then, $Z$ is an \emph{LCPS} of  ($X_{i + 1,j}$ and $Y_{k,\ell}$) or ($X_{i,j - 1}$ and $Y_{k,\ell}$) or ($X_{i,j}$ and $Y_{k,\ell - 1}$) or ($X_{i,j}$ and $Y_{k + 1,\ell}$).
\end{enumerate}
\end{theorem}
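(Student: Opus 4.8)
The plan is to prove both parts by the standard ``cut-and-paste'' exchange argument used for the optimal substructure of LCS, adapted to respect the palindrome constraint. Throughout, I would describe an occurrence of a common subsequence $S$ by a pair of \emph{witnesses}: strictly increasing index sequences in $[i,j]$ and in $[k,\ell]$ whose characters spell out $S$ inside $X_{i,j}$ and $Y_{k,\ell}$ respectively. Two elementary facts drive everything. First, a \emph{wrapping} fact: if $W$ is a palindrome and $a=x_i=x_j=y_k=y_\ell$, then $aWa$ is again a palindrome, and whenever $W$ occurs inside $X_{i+1,j-1}$ and $Y_{k+1,\ell-1}$ the string $aWa$ occurs inside $X_{i,j}$ and $Y_{k,\ell}$ (prepend positions $i$ and $k$, append positions $j$ and $\ell$). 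Second, a \emph{boundary-dropping} fact: if the character $x_i$ differs from the first symbol of a common subsequence $S$, then in every witness for $S$ in $X_{i,j}$ the first index exceeds $i$, so $S$ actually occurs inside $X_{i+1,j}$; symmetric statements hold for $x_j$, $y_k$, and $y_\ell$.

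For Part 1, I would first pin down the endpoints. Since $Z$ is a palindrome, $z_1=z_u$; call this symbol $c$. If $c\ne a$, then by the boundary-dropping fact none of the witness indices of $Z$ can sit on positions $i,j$ in $X$ or $k,\ell$ in $Y$, so the wrapping fact applies and $aZa$ is a CPS of $X_{i,j}$ and $Y_{k,\ell}$ of length $u+2$, contradicting the optimality of $Z$; hence $z_1=z_u=a$. Next I would show the middle block $Z''=z_2\cdots z_{u-1}$ is an LCPS of the inner substrings. It is a palindrome (a palindrome with its two equal ends removed) and, after trimming the outermost witness indices, occurs in both $X_{i+1,j-1}$ and $Y_{k+1,\ell-1}$, so it is a CPS there. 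If some CPS $W$ of the inner pair had $|W|>u-2$, the wrapping fact would make $aWa$ a CPS of the outer pair of length greater than $u$, again contradicting optimality; therefore $Z''$ is an LCPS of $X_{i+1,j-1}$ and $Y_{k+1,\ell-1}$.

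For Part 2, again set $c=z_1=z_u$. Because the hypothesis of this case is that $x_i,x_j,y_k,y_\ell$ are \emph{not} all equal, they cannot all coincide with the single symbol $c$; hence at least one of them differs from $c$. Applying the boundary-dropping fact to whichever boundary differs shows that $Z$ occurs in the corresponding shrunken substring, so $Z$ is a CPS of one of the four reduced pairs, e.g.\ $(X_{i+1,j},Y_{k,\ell})$ when $x_i\ne c$, and symmetrically for the remaining three. To upgrade ``CPS'' to ``LCPS'', I would invoke a monotonicity remark: any CPS of a reduced pair is also a common palindromic subsequence of the original pair $(X_{i,j},Y_{k,\ell})$, so its length is at most $u$; since $Z$ is itself a CPS of the reduced pair of length exactly $u$, it must be an LCPS of that pair.

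I expect the main obstacle to be the careful handling of the boundary-dropping step and its interaction with the logic of Part 2. One must argue at the level of witness index sequences, not merely the abstract subsequence $Z$, that a differing boundary character genuinely forces every occurrence off that position, and that the wrapping construction never collides with indices already used by $Z$. The only other delicate points are the short clause connecting ``the four boundary characters are not all mutually equal'' to ``at least one differs from $c$'', and a sanity check of the degenerate lengths $u\in\{0,1,2\}$ (where some $z_t$ do not exist) to confirm that the base cases of the recurrence remain consistent with the statement.
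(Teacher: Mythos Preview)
Your argument is correct and follows essentially the same exchange-argument route as the paper's proof: show $z_1=z_u=a$ in Part~1 by the ``append $a$ on both sides'' contradiction, then optimality of the middle block by another wrapping contradiction; in Part~2 use that the four boundary characters cannot all equal $c=z_1=z_u$, drop the offending boundary, and finish with monotonicity. Your explicit handling of witness index sequences and the degenerate lengths is more careful than the paper's sketch, but the underlying structure is identical.
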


\begin{proof}
\textbf{(1)} Since $Z$ is a palindrome by definition so we have $z_1 = z_u$. If $z_1 = z_u \neq a$ then we can append $a$ at both ends of $Z$ to obtain a common palindromic subsequence of $X_{i,j}$ and $Y_{k,\ell}$ of length $u + 2$, which contradicts the assumption that $Z$ is the \emph{LCPS} of $X_{i,j}$ and $Y_{k,\ell}$. So we must have $z_1 = z_u = a$. Now, the substring $z_2 \ldots z_{u - 1}$ with length $u - 2$ is itself a palindrome and it is common to both $X_{i + 1,j - 1}$ and $Y_{k + 1,\ell - 1}$. We need to show that it is an \emph{LCPS}. For the purpose of contradiction let us assume that there is an common palindromic subsequence $W$ of $X_{i + 1,j - 1}$ and $Y_{k + 1,\ell - 1}$ with length greater than $u - 2$. Then appending $a$ to both ends of $W$ will produce a common subsequence of $X_{i,j}$ and $Y_{k,\ell}$ with length greater than $u$, which is a contradiction.
%	\quad \Squarepipe

\textbf{(2)} Since $Z$ is a palindrome so $z_1 = z_u$. Here we have that the condition $x_i = x_j = y_k = y_\ell$ does not hold. So $z_1$ or $z_2$ is not equal to at least one of $x_i$ or $x_j$ or $y_k$ or $y_\ell$. Therefore $Z$ is a common palindromic subsequence of the substrings obtained by deleting at least one character from either end of $X_{i,j}$ or $Y_{k,\ell}$. If any pair of substrings obtained by deleting one character from either end of $X_{i,j}$ or $Y_{k,\ell}$ has a common palindromic subsequence $W$ with length greater than $u$ then it would also be a common palindromic sub-sequence of $X_{i,j}$ and $Y_{k,\ell}$, contradicting the assumption that $Z$ is a \emph{LCPS} of $X_{i,j}$ and $Y_{k,\ell}$.
%quad \Squarepipe

This completes the proof.\qed
%Proof of (1) and (2) completes the proof of our theorem. \quad \Squarepipe
\end{proof}
%The characterization of Theorem~\ref{thDp} shows that an \emph{LCPS} of two sequences contains within it the \emph{LCPS} of substrings of the two sequences. Thus, the \emph{LCPS} problem has optimal substructure property.
%\subsection{A Recursive Solution}
%\label{dpRecur}
%
From Theorem~\ref{thDp}, we see that if $x_i = x_j = y_k = y_\ell = a$ ( $a \in \Sigma$ ), we must find an \emph{LCPS} of $X_{i + 1,j - 1}$ and $Y_{k + 1,\ell - 1}$ and append $a$ on its both ends to yield the \emph{LCPS} of $X_{i,j}$ and $Y_{k,\ell}$. Otherwise, we must solve four subproblems and take the maximum of those. These four subproblems correspond to finding \emph{LCPS} of:

(a) $X_{i + 1,j}$ and $Y_{k,\ell}$ (b) $X_{i,j - 1}$ and $Y_{k,\ell}$ (c) $X_{i,j}$ and $Y_{k,\ell - 1}$ and (d) $X_{i,j}$ and $Y_{k + 1,\ell}$

%\begin{enumerate}
%	\item $X_{i + 1,j}$ and $Y_{k,\ell}$
%	\item $X_{i,j - 1}$ and $Y_{k,\ell}$
%	\item $X_{i,j}$ and $Y_{k,\ell - 1}$ and
%	\item $X_{i,j}$ and $Y_{k + 1,\ell}$
%\end{enumerate}
%Now we establish a recurrence for the length of \emph{LCPS}.
Let us define $lcps[i,j,k,\ell]$ to be the length of the \emph{LCPS} of  $X_{i,j}$ and $Y_{k,\ell}$. If either $i > j$ or $k > \ell$ then one of the substrings is empty and hence the length of our \emph{LCPS} is $0$. So we have,

\begin{align}
\label{dp3}
lcps[i,j,k,\ell] = 0 \hspace{0.3in} \text{if $i > j$ or $k > \ell$}
\end{align}

If either of the substrings has length $1$, then the obtained \emph{LCPS} will have length $0$ or $1$ depending on whether
that single character can form a common palindrome between the two substrings. In this case we have,

\begin{align}
\label{dp4}
lcps[i,j,k,\ell] = 1& &\text{if ($i = j$ or $k = \ell$) and (either of $x_i$ or $x_j$ equals either of $y_k$ or $y_\ell$)}
\end{align}

Using the base cases of Equations~\ref{dp3} and \ref{dp4} and the optimal substructure property of LCPS (Theorem~\ref{thDp}), we have the following recursive formula:

\begin{equation}
\label{dp}
\displaystyle
lcps[i,j,k,\ell] =
\begin{cases}
0 & \text{$i > j$ or $k > \ell$}\\
1 & \text{($i = j$ or $k = \ell$)}\\
  & \text{ and} \\
  & \text{(either of $x_i$ or $x_j$}\\
  & \text{equals}\\
  &\text{either of $y_k$ or $y_\ell$) }\\
2 + lcps[i + 1,j - 1,k + 1,\ell - 1] & \text{($i < j$ and $k < \ell$)}\\
	&\text{and}\\
	& \text{$x_i = x_j = y_k = y_\ell$}\\
\max( lcps[i + 1,j,k,\ell], lcps[i,j - 1,k,\ell], & \\
 lcps[i,j,k + 1,\ell], lcps[i,j,k,\ell - 1]) & \text{($i < j$ and $k < \ell$)}\\
	& \text{and}\\
	& \text{the condition ($x_i = x_j = y_k = y_\ell$)}\\
	& \text{does not hold}
\end{cases}
\end{equation}

The length of an \emph{LCPS} between $X$ and $Y$ shall be stored at $lcps[1,n,1,n]$.
%\subsection{Computing The LCPS Length}
%\label{dpcomLenLCPS}
%
\begin{algorithm}[t!]
\small
\caption{LCPSLength(X,Y)}
\label{alg1}
\begin{algorithmic}[1]
\STATE $n \leftarrow length[X]$

\FOR{ $i = 1$ to $n$}
	\FOR{ $j = 1$ to $i$ }
		\FOR{ $k = 1$ to $n$ }
			\FOR{ $\ell = 1$ to $k$ }
				\IF{ ($i = j$ or $k = \ell$) and (either of $x_i$ or $x_j$ equals either of $y_k$ or $y_\ell$)) }
					\STATE $lcps[i,j,k,\ell] = 1$
				\ELSE
					\STATE $lcps[i,j,k,\ell] = 0$
				\ENDIF
			\ENDFOR
		\ENDFOR
	\ENDFOR
\ENDFOR

\FOR{ $xLength = 2$ to $n$ }
	\FOR{ $yLength = 2$ to $n$ }
		\FOR{ $i = 1$ to $n - xLength + 1$ }
			\FOR{ $k = 1$ to $n - yLength + 1$ }
				\STATE $j = i + xLength$
				\STATE $\ell = k + yLength$
				\IF{ $x_i = x_j = y_k = y_\ell$ }
					\STATE $lcps[i,j,k,\ell] = 2 + lcps[i + 1,j - 1,k + 1,\ell - 1]$
				\ELSE
					\STATE $lcps[i,j,k,\ell] = \max(lcps[i+1,j,k,\ell],lcps[i,j-1,k,\ell],lcps[i,j,k+1,\ell],lcps[i,j,k,\ell-1])$
				\ENDIF
			\ENDFOR
		\ENDFOR
	\ENDFOR
\ENDFOR

\RETURN $lcps$
\end{algorithmic}
\end{algorithm}
Since there are $\Theta(n^4)$ distinct subproblems, we can use dynamic programming to compute the solution in a bottom up manner. Algorithm~\ref{alg1} outlines the \emph{LCPSLength} procedure which takes two sequences $X$ and $Y$ as inputs. It stores the $lcps[i,j,k,\ell]$ values in the $n \times n \times n \times n$ size table $lcps$. The table entries $i > j$ , $k > \ell$ has value $0$ since these entries correspond to at least one empty substring. We proceed in our computation with increasing length of the substrings. That is, table entries for substrings of length $v$ are already computed before substrings of length $v + 1$. The procedure returns the $lcps$ table and $lcps[1,n,1,n]$ contains the length of an \emph{LCPS} of $X$ and $Y$. Theorem~\ref{thDpRun} gives us the running time of Algorithm~\ref{alg1}.

\begin{comment}
\begin{algorithm}
\caption{LCPSLength(X,Y)}
\label{alg1}
\begin{algorithmic}[1]
\STATE $n \leftarrow length[X]$

\FOR{ $i = 1$ to $n$}
	\FOR{ $j = 1$ to $i$ }
		\FOR{ $k = 1$ to $n$ }
			\FOR{ $\ell = 1$ to $k$ }
				\IF{ ($i = j$ or $k = \ell$) and (either of $x_i$ or $x_j$ equals either of $y_k$ or $y_\ell$)) }
					\STATE $lcps[i,j,k,\ell] = 1$
				\ELSE
					\STATE $lcps[i,j,k,\ell] = 0$
				\ENDIF
			\ENDFOR
		\ENDFOR
	\ENDFOR
\ENDFOR

\FOR{ $xLength = 2$ to $n$ }
	\FOR{ $yLength = 2$ to $n$ }
		\FOR{ $i = 1$ to $n - xLength + 1$ }
			\FOR{ $k = 1$ to $n - yLength + 1$ }
				\STATE $j = i + xLength$
				\STATE $\ell = k + yLength$
				\IF{ $x_i = x_j = y_k = y_\ell$ }
					\STATE $lcps[i,j,k,\ell] = 2 + lcps[i + 1,j - 1,k + 1,\ell - 1]$
				\ELSE
					\STATE $lcps[i,j,k,\ell] = \max(lcps[i+1,j,k,\ell],lcps[i,j-1,k,\ell],lcps[i,j,k+1,\ell],lcps[i,j,k,\ell-1])$
				\ENDIF
			\ENDFOR
		\ENDFOR
	\ENDFOR
\ENDFOR

\RETURN $lcps$
\end{algorithmic}
\end{algorithm}
\end{comment}

\begin{theorem}
\label{thDpRun}
$LCPSLength(X,Y)$ computes the length of an LCPS of $X$ and $Y$ in $\mathcal{O}(n^4)$ time.
\end{theorem}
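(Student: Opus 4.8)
The plan is to prove two things: that the returned entry $lcps[1,n,1,n]$ equals the length of an LCPS of $X$ and $Y$, and that the procedure terminates after $\mathcal{O}(n^4)$ elementary operations. I would treat correctness first and the time bound second, since the latter reduces to a routine counting argument once the loop structure is understood.

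For correctness, I would argue by induction that upon termination every entry $lcps[i,j,k,\ell]$ holds the value prescribed by the recurrence in Equation~\ref{dp}, which by Theorem~\ref{thDp} is the true LCPS length of $X_{i,j}$ and $Y_{k,\ell}$. The first quadruple loop (over $i,j,k,\ell$ with $j\le i$ and $\ell\le k$) installs exactly the base-case values of Equations~\ref{dp3} and~\ref{dp4}: it covers all index tuples with $i\ge j$ and $k\ge\ell$, assigning $1$ when a single shared character forms a palindrome and $0$ otherwise. The second loop then fills the remaining entries (those with $i<j$ and $k<\ell$) in order of increasing substring lengths, so I would set up the induction on the pair $(\mathrm{xLength},\mathrm{yLength})$, ordered so that an entry spanning substrings of lengths $p$ and $q$ is computed only after all entries spanning strictly shorter substrings in at least one coordinate.

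The crux of the correctness argument — and the step I expect to require the most care — is verifying that every right-hand-side reference in Equation~\ref{dp} points to an already-computed entry. When $x_i=x_j=y_k=y_\ell$, the recurrence reads $lcps[i+1,j-1,k+1,\ell-1]$, whose two substrings are each shorter by two; when the four-character condition fails, the four arguments of the $\max$ each shorten exactly one of the two substrings by one. In every case the referenced tuple has a strictly smaller value of $\mathrm{xLength}$ or $\mathrm{yLength}$ (or else collapses into a base case), so by the loop ordering and the induction hypothesis its value is already correct. I would also confirm that the boundary references never escape the table, instead falling into the length-$0$ and length-$1$ cases seeded by the first loop.

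For the running time, I would bound each phase separately. The initialization phase performs $\bigl(\sum_{i=1}^{n} i\bigr)\bigl(\sum_{k=1}^{n} k\bigr) = \bigl(\tfrac{n(n+1)}{2}\bigr)^2 = \mathcal{O}(n^4)$ constant-time iterations. The main phase performs $\sum_{p=2}^{n}\sum_{q=2}^{n}(n-p+1)(n-q+1)$ iterations, which factors as a product of two sums each $\mathcal{O}(n^2)$, hence is again $\mathcal{O}(n^4)$; each such iteration evaluates either a single table lookup together with an addition, or a maximum of four table lookups, all $\mathcal{O}(1)$ under the assumption of constant-time array access. Summing the two phases yields the claimed $\mathcal{O}(n^4)$ total, completing the proof.
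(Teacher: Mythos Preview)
Your proposal is correct and, for the running-time bound, follows essentially the same counting argument as the paper: $\mathcal{O}(n^4)$ for initialization, then $\mathcal{O}(n^2)$ length pairs times $\mathcal{O}(n^2)$ start-position pairs times $\mathcal{O}(1)$ per entry. The paper's proof actually addresses only the time bound and leaves correctness implicit (relying on Theorem~\ref{thDp} and the derivation of the recurrence), whereas you supply an explicit induction on $(\mathrm{xLength},\mathrm{yLength})$ verifying that every right-hand-side reference is already filled; this is a reasonable and welcome addition rather than a different route.
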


\begin{proof}
The initialization step takes $\mathcal{O}(n^4)$ time. As the algorithm proceeds, it computes the \emph{LCPS} of substrings of $X$ and $Y$ in such a way that substrings of length $v$ is considered before substrings of length $v + 1$. Now, there are $\mathcal{O}(n^2)$ possible pairs of lengths between $X$ and $Y$. For each of these pairs there are $\mathcal{O}(n^2)$ possible start position pairs. So the four nested loops in Lines 15 - 18 requires $\mathcal{O}(n^4)$ time. And each table entry takes $\mathcal{O}(1)$ time to compute. So the table computation takes $\mathcal{O}(n^4)$ time. \qed
%
%Therefore, we have an overall running time of $\mathcal{O}(n^4)$ of \emph{LCPSLength}. \quad \Squarepipe
\end{proof}
%
%\subsection{Constructing an LCPS}
%\label{dpconLCPS}
%
We can use the lengths computed in $lcps$ table returned by \emph{LCS-Length} to construct an \emph{LCPS} of $X$ and $Y$. We simply begin at $lcps[1,n,1,n]$ and trace back through the table. As soon as we find that $x_i = x_j = y_k = y_\ell$, we find an element of LCPS, and recursively try to find the \emph{LCPS} for $X_{i + 1, j - 1}$ and $Y_{k + 1, \ell - 1}$. Otherwise, we find the maximum value in the $lcps$ table for $(X_{i + 1,j},Y_{k,\ell})$, $(X_{i,j - 1},Y_{k,\ell})$, $(X_{i,j},Y_{k + 1,\ell})$, $(X_{i,j},Y_{k,\ell - 1})$ and then use that value to compute subsequent members of LCPS recursively. Since at least one of $i,j,k,\ell$ is decremented in each recursive call, this procedure takes $\mathcal{O}(n)$ time to construct an \emph{LCPS} of $X$ and $Y$.%The recursive procedure \emph{LCPS\_Sequence} in Appendix~\ref{appPseudo} prints an \emph{LCPS} of $X$ and $Y$. The initial invocation is \emph{LCPS\_Sequence(X,Y,1,n,1,n)}.

\begin{comment}
\begin{algorithm}[!h]
\caption{LCPS\_Sequence(X,Y,i,j,k,$\ell$)}
\label{alg2}
\begin{algorithmic}[1]

\IF{ $i > j$ or $k > \ell$ }
	\RETURN
\ENDIF

\IF{ ($i = $j or $k = \ell$) }
	\IF { either of $x_i$ or $x_j$ equals either of $y_k$ or $y_\ell$ }
		\STATE print $x_i$
		\RETURN
	\ENDIF
\ENDIF

\IF { $X_i = X_j = Y_k = Y_\ell$ }
	\STATE print $x_i$
	\STATE $LCPS\_Sequence(X,Y,i + 1,j - 1,k + 1,\ell - 1)$
	\STATE print $x_i$
\ELSE
	\STATE $\ell \leftarrow \max( lcps[i + 1,j,k,\ell],lcps[i,j-1,k,\ell],lcps[i,j,k+1,\ell],lcps[i,j,k,\ell-1])$
	\IF{ $\ell = lcps[i + 1,j,k,\ell]$ }
		\STATE $LCPS\_Sequence(X,Y,i + 1,j,k,\ell)$
	\ELSIF{ $\ell = lcps[i,j-1,k,\ell]$ }
		\STATE $LCPS\_Sequence(X,Y,i,j-1,k,\ell)$
	\ELSIF{ $\ell = lcps[i,j,k+1,\ell]$ }
		\STATE $LCPS\_Sequence(X,Y,i,j,k+1,\ell)$
	\ELSIF{ $\ell = lcps[i,j,k,\ell-1]$ }
		\STATE $LCPS\_Sequence(X,Y,i,j,k,\ell-1)$
	\ENDIF
\ENDIF
\end{algorithmic}
\end{algorithm}
\end{comment}
%\input{anotherNew.tex}
\section{A Second Approach}
\label{rAlgo}
%In this section we shall present an algorithm which first maps our LCPS problem to a geometry problem and then solves it with graph theory. The running time of this problem is depended on the parameter $\mathcal{R}$ (as defined in Section~\ref{prl}) and the size of the alphabet. Section~\ref{rAlgoPreIdea} gives some preliminaries behind the algorithm. Later in Section~\ref{rAlgoGeoMap} we map our problem to a geometry problem. And finally in Section~\ref{rAlgoGrpSol} we give a graph theoretic algorithm to the problem and analyze its running time.

In this section, we present a second approach to efficiently solve the LCPS problem. In particular, we will first reduce our problem to a geometry problem and then solve it with the help of a balanced binary search tree data structure. The resulting algorithm will run in $O(\mathcal R^2 \log^3 n)$ time. Recall that, $\mathcal R$ is the number of ordered pairs at which the two strings match. 
%We start with some preliminaries in Section~\ref{rAlgoPreIdea}. In Section~\ref{rAlgoGeoMap}, we map our problem to a geometry problem. And finally in Section~\ref{rAlgoSegSol} we present our new algorithm.
%\subsection{Preliminary Idea}
%\label{rAlgoPreIdea}
%
First we make the following claim.

\begin{ourClaim}
\label{claimMatch}
Any common palindromic subsequence $Z = z_1 z_2 \ldots z_u$ of two strings $X$ and $Y$ can be decomposed into a set of $\sigma$-match pairs ($\sigma \in \Sigma$).
\end{ourClaim}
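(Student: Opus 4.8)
The plan is to produce the decomposition constructively from any fixed embedding of $Z$ into $X$ and $Y$. Since $Z$ is a common subsequence, the definition of subsequence guarantees index sequences $1 \le p_1 < p_2 < \cdots < p_u \le n$ and $1 \le q_1 < q_2 < \cdots < q_u \le n$ with $x_{p_t} = y_{q_t} = z_t$ for every $t \in \{1, \ldots, u\}$, and I would begin by fixing one such pair of embeddings. The first observation is then immediate: for each position $t$ the ordered pair $(p_t, q_t)$ satisfies $x_{p_t} = y_{q_t}$, so writing $\sigma = z_t$ it is a $\sigma$-match in the sense of Section~\ref{prl}. Thus every character of $Z$ already carries a match.

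The second step is to use the palindrome property to couple these matches symmetrically. Because $Z$ is a palindrome, $z_t = z_{u-t+1}$ for every $t$; denote this common symbol by $\sigma_t$. Hence for each $t$ with $1 \le t \le \lfloor u/2 \rfloor$ the two matches $(p_t, q_t)$ and $(p_{u-t+1}, q_{u-t+1})$ are both $\sigma_t$-matches, and together they form one $\sigma_t$-match pair. As $t$ ranges over $\{1, \ldots, \lfloor u/2 \rfloor\}$ these pairs are pairwise disjoint, since they use distinct indices of $Z$, and collectively they cover the indices $1, \ldots, \lfloor u/2 \rfloor$ and $\lceil u/2 \rceil + 1, \ldots, u$.

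The only remaining character is the central one, and this is where I expect the main (though minor) obstacle to sit, both because it must be defined away and because it forces one to pin down exactly what a $\sigma$-match pair is. When $u$ is odd the middle index $m = (u+1)/2$ has no distinct mirror partner, so to keep every character of $Z$ accounted for I would treat the single central match $(p_m, q_m)$ as a degenerate $\sigma_m$-match pair, namely the match paired with itself, observing that $z_m$ is precisely the symbol at the palindrome's centre. Collecting the $\lfloor u/2 \rfloor$ symmetric pairs together with this central pair (empty when $u$ is even) yields a set of $\sigma$-match pairs that uses each character of $Z$ exactly once, which is the claimed decomposition. I would close by remarking that within each pair the indices satisfy $p_t < p_{u-t+1}$ and $q_t < q_{u-t+1}$ and that the pairs nest as $t$ grows; this nesting is not needed for the claim itself but is the structural feature that the subsequent geometric reduction will exploit.
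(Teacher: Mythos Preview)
Your argument is correct and follows essentially the same approach as the paper: both use the palindrome relation $z_t = z_{u-t+1}$ to couple the match corresponding to $z_t$ with that corresponding to $z_{u-t+1}$, yielding the desired $\sigma$-match pairs. The paper handles the central element implicitly by taking $1 \le i \le \lceil u/2 \rceil$ (so that for odd $u$ the middle index is paired with itself), whereas you make the embeddings and the degenerate central pair explicit; this extra care is fine but does not constitute a different method.
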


\begin{proof}
Since $Z$ is a palindrome itself so we have, $z_i = z_{u - i + 1}$ for $ 1 \leq i \leq \left\lceil \frac{u}{2} \right\rceil$. Since $Z$ is common to both $X$ and $Y$, each $z_i$ , $1 \leq i \leq u$ corresponds to a $\sigma$-match between $X$ and $Y$.
Therefore, $z_i$ and $z_{u - i + 1}$ is a $\sigma$-match pair. Now we can obtain $\sigma$-match pairs by pairing up each $z_i$ and $z_{u - i + 1}$ for all $ 1 \leq i \leq \left\lceil \frac{u}{2} \right\rceil$. So we have decomposed $Z$ into a set of $\sigma$-match pairs. \qed
\end{proof}

It follows from Claim~\ref{claimMatch} that constructing a common palindromic subsequence of two strings can be seen as constructing an appropriate set of $\sigma$-match pairs between the input strings. An arbitrary pair of $\sigma$-match, $\langle (i,j),(k,\ell) \rangle$ (say $m_1$), from among all pair of $\sigma$-matches between a pair of strings, can be seen as inducing a substring pair in the input strings. Now suppose we want to construct a common palindromic subsequence $Z$ with length $u$ with $m_1$ at the two ends of $Z$. Clearly we have $z_1 = z_u = x_i = x_j = y_k = y_\ell$. Then to compute $Z$, we will have to recursively select $\sigma$-match pairs between the induced substrings $X_{i,j}$ and $Y_{k,\ell}$. In this way we shall get a set of $\sigma$-matches which will correspond to the common palindromic sub-sequences of the input strings. If we consider all possible $\sigma$-match pairs as the two end points of the common palindromic sub-sequence then the longest obtained one among all these will be an \emph{LCPS} of the input strings. This is the basic idea for constructing \emph{LCPS} in our new approach.

To compute $\mathcal M_\sigma$ for any $\sigma \in \Sigma$, we first linearly scan $X$ and $Y$ to compute two arrays, $X_\sigma$ and $Y_\sigma$, which contains the indices in $X$ and $Y$ where $\sigma$ occurs. Then we take each pair between the two arrays to get all the ordered pairs where $\sigma$ occurs in both strings.

\subsection{Mapping the LCPS Problem to a Geometry Problem}
\label{rAlgoGeoMap}
Each match between the strings $X$ and $Y$ can be visualized as a point on a $n \times n$ rectangular grid where all the co-ordinates have integer values. Then, any rectangle in the grid corresponds to a pair of substrings of $X$ and $Y$. Any $\sigma$-match pair defines two corner points of a rectangle and thus induces a rectangle in the grid. Now, our goal is to take a pair of $\sigma$-matches as the two ends of common palindromic sub-sequence and recursively construct the set of pair of $\sigma$-matches from within the induced substrings. Clearly, the rectangle induced by a pair of $\sigma$-matches will in turn contain some points (i.e matches) as well. We recursively continue within the induced sub-rectangles to find the \emph{LCPS} between the substrings induced by the rectangles. When the recursion unfolds, we append the $\sigma$-match pair on the obtained sequence to get the \emph{LCPS} that can be obtained with our $\sigma$-match pair corresponding to the two ends. Clearly, if we do this procedure for all such possible $\sigma$-match pairs then the longest of them will be our desired \emph{LCPS} between the two strings. The terminating condition of this recursive procedure would be:

\begin{enumerate}
	\item[T1. ] If there is no point within any rectangle. This corresponds to the case where at least one of the substrings is empty.
	\item[T2. ] If it is not possible to take any pair of $\sigma$-matches within any rectangle. This corresponds to the single character case in our Dynamic Programming solution.
\end{enumerate}

So, in summary we do the following.
\begin{enumerate}
\item Identify an induced rectangle (say $\Psi_1$) by a pair of $\sigma$-matches.
\item Pair up $\sigma$-matches within $\Psi_1$ to obtain another rectangle (say $\Psi_2$) and so on until we encounter either of the two terminating conditions T1 or T2.
\item We repeat the above for all possible $\sigma$-match pairs ($\forall \sigma \in \Sigma$).
\item At this point, we have a set of nested rectangle structures.
\item Here, an increase in the nesting depth of the rectangle structures as it is being constructed, corresponds to adding a pair of symbols\footnote{If condition T2 is reached, only a symbol shall be added.} to the resultant palindromic subsequence. Hence, the set of rectangles with maximum nesting depth gives us an \emph{LCPS}.
\end{enumerate}

%Identify an induced rectangle (say $\Psi_1$) by a pair of $\sigma$-matches and then again pair up $\sigma$-matches within $\Psi_1$ to obtain another rectangle (say $\Psi_2$) and so on until we encounter either of the two base conditions stated above. We thus have a set of nested rectangles whose two corners are defined by a pair of $\sigma$-matches. By considering all possible $\sigma$-match pairs ($\forall \sigma \in \Sigma$) we shall have a set of nested rectangle structure. An increase in the being constructed nesting depth corresponds to adding a / a pair of symbol(s) to the resultant palindromic subsequence. The set of rectangles with maximum nesting depth will give us the \emph{LCPS}.
Now our problem reduces to the following interesting geometric problem: \emph{Given a set of nested rectangles defined by the $\sigma$-match pairs $\forall \sigma \in \Sigma$, we need to find the set of rectangles having the maximum nesting depth. }

In what follows, we will refer to this problem as the Maximum Depth Nesting Rectangle Structures (MDNRS) problem.
%Clearly, from this we shall get the length of \emph{LCPS} as well as the corresponding sequence also.

\subsection{A Solution to the MDNRS Problem}
\label{rAlgoSegSol}
%In order to solve the problem stated at the end of Section~\ref{rAlgoGeoMap} we map this $2$-dimensional problem to a $4$-dimensional problem.
A $\sigma$-match pair, $\langle (i,j), (k,\ell) \rangle$ basically represents a $2$-dimensional rectangle (say $\Psi$). Assume, without the loss of generality that $(i,j)$ and $(k,\ell)$ correspond to the lower left corner and upper right corner of $\Psi$, respectively. In what follows, depending on the context, we will sometimes use $\langle (i,j), (k,\ell) \rangle$ to denote the corresponding rectangle. Now, a rectangle $\Psi' (\langle (i',j'),(k',l')\rangle)$ will be nested within rectangle $\Psi (\langle (i,j),(k,l) \rangle)$ \emph{iff} the following condition holds:

$i' > i$ and $j' > j$ and $k' < k$ and $\ell' < \ell$ \\
$\Leftrightarrow$ $i' > i$ and $j' > j$ and $-k' > -k$ and $-\ell' > -\ell$ \\
$\Leftrightarrow$ $(i',j',-k',-\ell') > (i,j,k,\ell)$.

Now we convert a $2$-dimensional rectangle $\Psi (\langle (i,j),(k,\ell) \rangle)$ to a $4$-dimensional point $P_{\Psi}(i,j,-k,-\ell)$. We say that a point $(x,y,z,w)$ is chained to another point $(x',y',z',w')$ \emph{iff} $(x,y,z,w) > (x',y',z',w')$. Then, it is easy to see that, a rectangle $\Psi' (\langle (i',j'),(k',\ell')\rangle)$, is nested within a rectangle $\Psi (\langle (i,j),(k,\ell)\rangle)$ \emph{iff} the point $P_{\Psi'}(i',j',-k',-\ell')$ is chained to the point $P_{\Psi}(i,j,-k,-\ell)$. Hence, the problem of finding the set of rectangles in $2$-dimension having the maximum nesting depth easily reduces to finding the set of corresponding points in $4$-dimension having the maximum chain length.

First we give a solution of this problem for $2$-dimension. Later we shall extend our solution for $4$-dimension. In $2D$ our points will be in the form of $(x,y)$. We maintain a $1$-dimensional balanced binary search tree $\mathcal{T}$ that will contain the $x$ coordinate of the points along with a value as the points are being processed. The value indicates the length of longest chain that can be formed starting from any point with that $x$ co-ordinate. Initially $\mathcal{T}$ is empty. We process the points in non-increasing order of their $y$ coordinates. For each point $(x,y)$ we make a query to $\mathcal{T}$ for the $x'$ such that $x'$ is the smallest number that is greater than $x$ (i.e., a \emph{successor query}). If the value corresponding to $x'$ is $K$ then we can construct a chain of length $K + 1$ starting from the point $(x,y)$, and which will immediately preceded a point with $x'$ as its $x$-coordinate. Now we insert/update $x$ in the tree with corresponding value $K + 1$. Since $\mathcal{T}$ is balanced, any insertion, deletion and successor query operation can be done in $\mathcal{O}(\log{n})$ time. The maximum value in $\mathcal{T}$ is the maximum length of the chain which in turn will yield the length of \emph{LCPS} between the input sequences. If we also store at $x$ the point $(x,y)$, which yields the maximum chain length then we can use that to trace the chain later in linear time to get the sequence as well.

We can extend our $1$-Dimensional balanced binary tree to $d$-dimension in the form of multi-level trees using an inductive definition on $d$. In $d$-dimension we shall store $(x_1,x_2, \ldots x_{d-1})$ in $\mathcal{T}$ with respect to $x_d$-coordinates. For all nodes $u$ of $\mathcal{T}$, we associate a $(d - 1)$-dimensional multi-level balanced binary search tree with respect to $(x_1,x_2, \ldots x_{d - 1})$. During insertion, deletion and search operations for $d$-dimensional points we also perform the same operation recursively in the $d - 1$-dimensional trees. By induction on $d$ it can be trivially shown that the insertion, deletion and searching in this balanced multi-level binary search tree can be done in $\mathcal{O}(\log^d{n})$ time.

Finally to solve our problem we simply use a $3$-dimensional balanced multi-level binary search tree. Now we process the points $(x,y,z,w)$ in non-increasing order of the highest dimension $w$. For each point $(x,y,z,w)$ we query the tree for $(x',y',z')$ such that $x > x'$, $y > y'$ , $z > z'$ and $x', y', z'$ are the smallest number greater than $x,y,$ and $z$ respectively. The rest of the process are same.

\begin{algorithm}[t!]
\small
\caption{LCPS-New(X,Y,$\Sigma$)}
\label{alg3}
\begin{algorithmic}[1]

\FOR{each $\sigma \in \Sigma$}
	\STATE $\mathcal{M}_\sigma \leftarrow \phi$
	\STATE $X\sigma \leftarrow \phi$
	\STATE $Y\sigma \leftarrow \phi$
	\FOR{$i = 1$ to $n$}
		\IF{$X[i] = \sigma$}
			\STATE $X\sigma \leftarrow X\sigma \cup \{i\}$
		\ELSIF{$Y[i] = \sigma$}
			\STATE $Y\sigma \leftarrow Y\sigma \cup \{i\}$
		\ENDIF
	\ENDFOR
	
	\FOR{$i = 1$ to $|X\sigma|$}
		\FOR{$j = 1$ to $|Y\sigma|$}
			\STATE $\mathcal{M}_\sigma \leftarrow \mathcal{M}_\sigma \cup \{(X\sigma[i],Y\sigma[i])\}$
		\ENDFOR
	\ENDFOR
\ENDFOR

\STATE $Rectangles \leftarrow \phi$ \COMMENT{$Rectangles$ contains the set of all rectangles}

\FOR{each $\sigma \in \Sigma$}
	\FOR{each match $(i,j) \in \mathcal{M}_\sigma$}
		\FOR{each match $(k,\ell) \in \mathcal{M}_\sigma$}	
			\STATE $Rectangles \leftarrow Rectangles \cup \{(i,j),(k,\ell)\}$
		\ENDFOR
	\ENDFOR
\ENDFOR

\STATE $P \leftarrow \phi$
\FOR{each $\Psi(i,j,k,\ell) \in Rectangles$}
	\STATE $\mathcal{P} \leftarrow \mathcal{P} \cup \{(i,j,-k,-l)\}$
\ENDFOR
\STATE Sort the points in $\mathcal{P}$ in non increasing order of 4th dimension
\STATE Initialize the multi-level balanced binary search tree $\mathcal{T}$ as empty tree
%\STATE Construct the $3$-dimensional segment tree $\mathcal{T}$ with the upper three dimensions of each point $\in$ $\mathcal{P}$

\FOR{each point $p(i,j,k,l) \in \mathcal{P}$}
	\STATE Find the point $(i',j',k')$ in $\mathcal{T}$ such that $i' > i$ and $j' > j$ and $k' > k$ and $i', j', k'$ are the smallest integer greater than $i, j,$ and $k$ respectively.
	%and the value stored at that point is maximum among all such points
	\STATE $K \leftarrow$ the value stored at $(i',j',k')$
	\IF{$(i,j,k)$ exists in $\mathcal{T}$}
		\STATE Update the value of $(i,j,k)$ with $K + 1$
	\ELSE
		\STATE Insert the node $(i,j,k)$ with value $K + 1$
	\ENDIF

	%\STATE Store $K + 1$ in $\mathcal{T}$ at the node $(i,j,k)$
	\STATE Also store $(i',j',k')$ in $\mathcal{T}$ at the node $(i,j,k)$ as its successor.
\ENDFOR
\STATE $lcps \leftarrow$ maximum value stored in $\mathcal{T}$
\STATE $LCPS \leftarrow$ trace the successors to obtain the sequence
\RETURN $LCPS$
\end{algorithmic}
\end{algorithm} 
Algorithm~\ref{alg3} outlines the \emph{LCPS-New} procedure which takes as input two strings $X$ and $Y$, each of length $n$ and the alphabet, $\Sigma$.
%$Rectangles$ contains the set of all rectangles defined by its lower left and upper right corners.
%Finally the algorithm returns $LCPS$ which contains the Longest Common Palindromic Subsequence of $X$ and $Y$.
The following theorem gives the worst case running time of \emph{LCPS-New} procedure.

\begin{theorem}
The LCPS-New procedure computes an LCPS of strings $X$ and $Y$ in $\mathcal{O}(\mathcal{R}^2 \log^3{n})$ time.
\end{theorem}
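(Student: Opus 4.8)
The plan is to prove the two assertions of the statement in turn: first that \emph{LCPS-New} returns an LCPS, and then that it runs in $\mathcal{O}(\mathcal{R}^2\log^3 n)$ time. Correctness is largely inherited from the development preceding the theorem, so I would keep it brief. By Claim~\ref{claimMatch} every common palindromic subsequence decomposes into $\sigma$-match pairs, and the preceding discussion has argued that such a decomposition is exactly a chain of nested rectangles whose nesting depth equals the number of matched symbol-pairs, with the odd-length centre absorbed by terminating condition T2. Mapping each rectangle $\langle (i,j),(k,\ell)\rangle$ to the point $(i,j,-k,-\ell)$ converts ``nested'' into ``chained'', so the maximum chain length over $\mathcal{P}$ equals the maximum nesting depth and hence the LCPS length. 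I would therefore only verify an invariant for the sweep: since the points are processed in non-increasing order of the fourth coordinate, when a point $p$ is handled the tree $\mathcal{T}$ contains precisely the candidate successors, so the successor query plus $+1$ computes the longest chain starting at $p$, and storing the chosen successor lets the final trace reconstruct the sequence.

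For the running time I would account for the cost phase by phase, the crux being the number of rectangles. The preprocessing that builds $X\sigma,Y\sigma$ and the sets $\mathcal{M}_\sigma$ scans the strings and then forms all pairs, costing $\mathcal{O}(n|\Sigma|+\mathcal{R})$ in total, which is dominated by the later phases. For a fixed $\sigma$ the construction pairs every $\sigma$-match with every $\sigma$-match, producing $\mathcal{R}_\sigma^2$ rectangles, so the total number of rectangles, and hence of points in $\mathcal{P}$, is $\sum_{\sigma\in\Sigma}\mathcal{R}_\sigma^2$. The key inequality is
\[
\sum_{\sigma\in\Sigma}\mathcal{R}_\sigma^2 \;\le\; \Bigl(\sum_{\sigma\in\Sigma}\mathcal{R}_\sigma\Bigr)^{2} \;=\; \mathcal{R}^2,
\]
valid because the $\mathcal{R}_\sigma$ are non-negative. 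Thus $|\mathcal{P}|=\mathcal{O}(\mathcal{R}^2)$.

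The dominant phase is the sweep over $\mathcal{P}$. Each of the $\mathcal{O}(\mathcal{R}^2)$ points triggers one successor query, one insert-or-update, and one successor-store in the $3$-dimensional multi-level balanced search tree $\mathcal{T}$. By the inductive bound established earlier in this section, an operation on a $d$-level tree costs $\mathcal{O}(\log^{d} n)$; with $d=3$ this is $\mathcal{O}(\log^{3} n)$ per operation, giving $\mathcal{O}(\mathcal{R}^2\log^{3} n)$ for the whole sweep. One subtlety deserves a word: $\mathcal{T}$ may hold $\mathcal{O}(\mathcal{R}^2)$ entries, so a priori an operation might cost $\mathcal{O}(\log^{3}(\mathcal{R}^2))$; but every coordinate is an integer in $[-n,n]$, so $\mathcal{R}\le n^2$ and $\log(\mathcal{R}^2)=\mathcal{O}(\log n)$, whence $\log^{3}(\mathcal{R}^2)=\mathcal{O}(\log^{3} n)$ and the bound is unaffected. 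Sorting $\mathcal{P}$ costs $\mathcal{O}(\mathcal{R}^2\log\mathcal{R})=\mathcal{O}(\mathcal{R}^2\log n)$, and the final trace follows a chain of length at most $n$, so both are lower order. Summing the phases yields $\mathcal{O}(\mathcal{R}^2\log^{3} n)$ overall.

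The main obstacle I anticipate is not a single calculation but pinning down two points cleanly. The first is the combinatorial bound $\sum_\sigma\mathcal{R}_\sigma^2\le\mathcal{R}^2$, which is exactly what collapses the naive per-symbol pair count into $\mathcal{R}^2$. The second, and more delicate, is the justification that a query/update in the multi-level tree is genuinely $\mathcal{O}(\log^{3} n)$ rather than scaling with the stored population; this rests on the earlier inductive claim about $d$-level trees, so I would make sure that claim is phrased to bound the per-operation cost in terms of the coordinate range $n$ (not the number of inserted points) before invoking it. A secondary subtlety worth a sentence is the nesting-depth-versus-length correspondence, making certain that the reported maximum chain length equals $\lvert LCPS(X,Y)\rvert$ exactly, rather than being off by a constant factor, which is where the T2 terminating condition for the odd centre must be handled.
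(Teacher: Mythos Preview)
Your proposal is correct and follows essentially the same route as the paper: bound the number of rectangles/points by $\mathcal{O}(\mathcal{R}^2)$, charge $\mathcal{O}(\log^3 n)$ per tree operation, and use $\mathcal{R}=\mathcal{O}(n^2)$ to turn $\log\mathcal{R}$ into $\mathcal{O}(\log n)$. You are in fact more careful than the paper on two points it leaves implicit: the paper simply asserts $\mathcal{O}(\mathcal{R}^2)$ rectangles without the inequality $\sum_\sigma \mathcal{R}_\sigma^2 \le (\sum_\sigma \mathcal{R}_\sigma)^2 = \mathcal{R}^2$, and it omits any correctness argument in this proof, relying entirely on the preceding discussion; your brief invariant sketch for the sweep and the nesting-depth/length correspondence is a genuine addition.
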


\begin{proof}
Since there are $\mathcal{R}$ matches between $X$ and $Y$, we have $\mathcal{O}(\mathcal{R}^2)$ rectangles.
%Every rectangle is converted to a single point.
Therefore, there are $\mathcal{O}(\mathcal{R}^2)$ points in $4$-dimension. Since, $\mathcal R = \mathcal O(n^2)$ in the worst case, sorting the points require $\mathcal{O}(\mathcal{R}^2\log{\mathcal{R}^2}) = \mathcal{O}(\mathcal{R}^2\log{n})$ time. Since the coordinate values are bounded within the range $1$ to $n$, we can sort them using counting sort algorithm. So this will reduce the sorting time to $\mathcal{O}(\mathcal{R}^2)$. Constructing a $3$-dimensional multi-level balanced binary search tree from $\mathcal{O}(\mathcal{R}^2)$ points takes $\mathcal{O}(\mathcal{R}^2\log^3{\mathcal{R}^2})$ = $\mathcal{O}(\mathcal R^2\log^3{n})$ time. Each query in tree of can be done in $\mathcal{O}(\log^3{n})$ time. Now, for $\mathcal{O}(\mathcal{R}^2)$ points, a total of $\mathcal{O}(\mathcal{R}^2)$ queries are made which takes a total of $\mathcal{O}(R^2\log^3{n})$ time. Therefore, the overall running time of our algorithm is  $\mathcal{O}(\mathcal{R}^2\log^3{n})$. \qed
\end{proof}

Since $\mathcal{R} = \mathcal{O}(n^2)$, the running time of our algorithm becomes $\mathcal{O}(n^4\log^3{n})$ in the worst case, which is not better than that of the Dynamic Programming algorithm ($\mathcal{O}(n^4)$). But in cases where we have $\mathcal{R} = \mathcal{O}(n)$ it exhibits very good performance. In such case the running time reduces to $\mathcal{O}(n^2\log^3{n})$. Even for $\mathcal R = \mathcal{O}(n^{1.5})$ this algorithm performs better ($\mathcal{O}(n^3\log^3{n})$) than the DP algorithm.

\section{Conclusion and Future Works}
\label{fwc}
In this paper, we have introduced and studied the longest common palindromic subsequence (LCPS) problem, which is a variant of the classic LCS problem. We have first presented a dynamic programming algorithm to solve it, which runs in $\mathcal{O}(n^4)$ time. Then, we have identified and studied some interesting relation of the problem with computational geometry and devised an $\mathcal{O}(\mathcal{R}^2\log^3{n})$ time algorithm. In our results, we have assumed that the two input strings are of equal length $n$. However, our results can be easily extended for the case where the two input strings are of different lengths. To the best of our knowledge this is the first attempt in the literature to solve this problem. 
%We are also investigating the average case behavior of our algorithms. 
Further research can also be carried out towards studying different other variants of the LCPS problem.

%And also extend / use our techniques for other variants of \textbf{LCPS} as well such as finding Longest Common Gapped Palindromic Subsequence, where the resultant palindrome might have the form $UVW$ such that $|V| \leq q$ and the string edit distance between $U$ and reversal of $W$ is not more $k$, where $k$ and $q$ are fixed parameters. It would also be interesting to employ our techniques for indeterminant strings as well. Finally it would be worth studying whether the running time of our algorithm can be improved to the best known solution of \emph{LCS}.

\bibliographystyle{splncs03}
\bibliography{paper}

\end{document}